\def\BibTeX{{\rm B\kern-.05em{\sc i\kern-.025em b}\kern-.08em
    T\kern-.1667em\lower.7ex\hbox{E}\kern-.125emX}}
\tikzstyle{trans}=[font=\footnotesize]
\tikzstyle{single_s}=[rectangle,draw, double, trans, minimum size=8mm]
\tikzstyle{single_s2}=[rectangle,draw, double, trans, minimum 
\tikzstyle{single_is}=[rectangle,draw, double, fill=yellow, trans, minimum 
\tikzstyle{kwc_s}=[rectangle,draw, trans, minimum 
\tikzstyle{kwc_is}=[rectangle,draw, trans, fill=yellow, minimum 
\tikzstyle{kwc_cs}=[rectangle,draw, trans, minimum 
\tikzstyle{state}=[rectangle,double,draw,trans, minimum size=8mm]
\tikzstyle{istate}=[circle,draw, trans, minimum size=8mm]
\tikzstyle{is_state2}=[rectangle,dotted,draw,trans, minimum 
\tikzstyle{is_state}=[rectangle, fill=gray,dotted,draw,trans, minimum 
\tikzstyle{rstate}=[rectangle,draw,trans, minimum size=10mm, 
\tikzstyle{rstate2}=[rectangle,draw,trans, minimum size=10mm, 
\def\T{\hbox to 1.5em{\hfill}}
\def\L#1{\raise .2ex\hbox{\scriptsize{$#1$}}&}
\newtheorem{coro}{Corollary}
\newtheorem{theorem}{Theorem}
\newtheorem{definition}{Definition}
\newtheorem{lemma}{Lemma}
\newcommand{\certifier}{\textsc{Certifaiger++}\xspace}
\newcommand{\wlcertifier}{\textsc{Certifaiger-wl}\xspace}
\newcommand{\vars}{\ensuremath\mathit{vars}}
\begin{document}

\title{Stratified Certification for $k$-Induction\\\LARGE{(Extended Version)}}

\author{\IEEEauthorblockN{Emily Yu\IEEEauthorrefmark{1},
		Nils Froleyks\IEEEauthorrefmark{1}, Armin 
		Biere\IEEEauthorrefmark{2} and
		Keijo Heljanko\IEEEauthorrefmark{3}\IEEEauthorrefmark{4}}\\
\IEEEauthorrefmark{1}Johannes Kepler University, Linz, Austria\\
	\IEEEauthorrefmark{2}Albert–Ludwigs–University,	Freiburg, Germany\\
	\IEEEauthorrefmark{3}University of Helsinki, Helsinki, Finland\\
	\IEEEauthorrefmark{4}Helsinki Institute for Information Technology, 
	Helsinki, 
	Finland}

\maketitle

\begin{abstract}
Our recently proposed certification framework for bit-level 
$k$-induction-based model 
checking has been shown to be quite effective 
in increasing the trust of verification results
even though
it partially involved quantifier reasoning.
In this paper we show how to simplify the approach
by assuming reset functions to be stratified.
This way it can be lifted to word-level and in principle to
other theories where quantifier reasoning is difficult.
Our new method requires six simple SAT checks and one 
polynomial-time check, allowing certification to remain in co-NP while
the previous approach required five SAT checks and one QBF check.
Experimental results show
a substantial performance gain
for our new approach.
Finally we present and evaluate our new tool \wlcertifier which is able to
certify $k$-induction-based word-level model checking.

\end{abstract}

\section{Introduction}

Over the past several years, there has been growing interest in system 
verification 
using word-level reasoning. Satisfiability Modulo 
Theories (SMT) solvers for the theory of fixed-size 
bit-vectors are widely used for word-level 
reasoning~\cite{DBLP:conf/cav/NiemetzPB16,DBLP:journals/fmsd/NiemetzPB17}.
 For example, 
word-level model checking has been an 
important 
part of the hardware model checking competitions since 2019. Given the 
theoretical 
and practical importance of word-level verification, a generic certification 
framework for it is necessary. As quantifiers in combination with 
bit-vectors are challenging for SMT solvers and various 
works have focused on eliminating quantifiers in 
SMT~\cite{DBLP:journals/jar/NiemetzP0ZBT21,
	DBLP:journals/fmsd/NiemetzPB17,DBLP:conf/cav/NiemetzPRBT18}, 
	a main 
	goal of this paper is to 
generate certificates without quantification which are thus efficiently 
machine 
checkable. 

	Temporal induction (also known as 
	$k$-induction)~\cite{DBLP:conf/fmcad/SheeranSS00} is a 
well-known model checking technique for verifying software and hardware 
systems. An attractive feature of $k$-induction is that it is natural to 
integrate it with modern SAT/SMT solvers, making it popular in both 
bit-level model checking and beyond~\cite{ChampionMST16,DBLP:conf/cav/MouraORRSST04,JovanovicD16},
including word-level model checking.

Certification helps gaining 
confidence in model checking results, 
which 
is important for both safety- and business-critical applications. There have 
been several contributions focusing on generating proofs for 
SAT-based model 
checking\cite{DBLP:conf/fm/ConchonMZ15,DBLP:conf/fmcad/GurfinkelI17,DBLP:conf/hvc/KuisminH13,DBLP:conf/cav/Namjoshi01,DBLP:conf/nfm/WagnerMTCS17,DBLP:conf/fmcad/GriggioRT18,DBLP:conf/icfem/YuBH19}.
 For example~\cite{DBLP:journals/fmsd/GriggioRT21} 
and~\cite{DBLP:conf/fmcad/GriggioRT18} proposed an approach to certify 
LTL properties and a few preprocessing techniques by generating deductive 
proofs. In 
this paper, we focus on finding an inductive invariant for 
$k$-induction. Unlike 
other SAT/SMT-based model checking techniques such as 
IC3~\cite{DBLP:conf/vmcai/Bradley11} and 
interpolation~\cite{DBLP:conf/cav/McMillan03,DBLP:journals/tcs/McMillan05},
$k$-induction does not 
automatically 
generate an inductive invariant that can be used as a 
certificate~\cite{DBLP:books/daglib/0080029}. In 
previous research~\cite{DBLP:conf/cav/YuBH20}, 
certification of $k$-induction 
can be achieved via five SAT checks together with a one-alternation QBF 
check, 
redirecting the certification problem to verifying an inductive invariant in 
an extended model that 
simulates the original one.

At the heart of the present contribution is the idea of reducing the 
certification method of 
$k$-induction to pure SAT checks, \textit{i.e.,} eliminating the quantifiers. 
This enables us to complete the 
certification procedure at a lower complexity, and to directly apply the 
framework 
to word-level certification. We introduce 
the notion of stratified 
simulation which allows us to reason about the simulation relation between 
two systems.

This 
stratified simulation relation can be verified by three SAT and a 
polynomial-time check. The latter checks whether the reset function definition is 
indeed stratified. In 
addition to that, we present a witness circuit 
construction which simulates the original circuit under the stratified 
simulation relation thus creating a simpler and more elegant certification 
construction for $k$-induction. 

While the previous work only focused on bit-level model checking, we also 
lift our method to word-level by implementing a complete toolsuite 
\wlcertifier, where the experiments show the practicality and effectiveness 
of 
our 
certification method for word-level models.

The rest of the paper is organised as follows. 
Section~\ref{bg} includes the
preliminaries where we fix the notation and basic concepts. In 
Section~\ref{certification} we present the formal definition of stratified 
simulation and the witness circuit 
construction for $k$-induction with an inductive invariant. We continue to 
show that the 
witness circuit simulates the given circuit and provide a formal proof that 
the inductive invariant can be used as a 
proof certificate. We give details of our implementation of the approach for 
both bit-level and word-level model checking in 
Section~\ref{exp}, where we also provide experimental results of our 
toolkits on different sets of benchmarks.

\section{Background}~\label{bg}

This paper extends previous work in certification for $k$-induction-based 
bit-level model 
checking~\cite{DBLP:conf/cav/YuBH20}. In this section, we present 
essential
concepts and notations.

For the sake of
simplicity we work with \emph{functions} represented as interpreted terms and formulas over
fixed but arbitrary theories which include an equality predicate.
We further assume a finite sorted set of variables $L$ where each variable $l\in L$ is 
associated with a finite domain of possible values.
We also include Boolean 
variables as variables with a domain of 
$\{\top, \bot\}$, for which we keep standard notations.

For two 
sets of 
variables $I$ and $L$, we also write $I,L$ to 
denote their union. Given two
functions $f(V)$, $g(V')$ where $V\subseteq V'$ (represented as interpreted terms
over our fixed but arbitrary theories)
we call them \emph{equivalent}, written $f(V) \equiv g(V')$, if for every 
assignment to variables in $V$ and $V'$ that matches on the shared set of 
variables $V$, the functions $f(V)$, $g(V')$ have the same values. Additionally, we 
use ``\,$\simeq$\," for syntactic 
equivalence~\cite{DBLP:books/el/RV01/DegtyarevV01a}, ``$\,\rightarrow\,$" 
for syntactic implication, and 
``\,$\Rightarrow$\," for semantic 
implication.  To define semantical
concepts or abbreviations we stick to equality ``$\,=\,$".
We use $\vars(f)$ to denote the set of variables occurring in the
syntactic representation of a function $f$.

In word-level model checking operations are applied to fixed-size 
bit-vectors. We introduce the notion of word-level circuits where we model 
inputs and latches as finite-domain variables. The 
specification of a circuit is described in terms of a ``good states" property
(so we focus on safety in this paper).

\begin{definition}[Circuit]\label{circuit}
	A circuit is a tuple $C=(I, L, R, F, P)$ such that:
	\begin{itemize}
		\item $I$ is a finite set of \emph{input} variables.
		\item $L$ is a finite set of \emph{latch} variables.
		\item $R= \{ r_l(L) \mid l \in L\}$ is a set of 
		\emph{reset 
			functions}.
		\item $F=\{ f_l(I,L) \mid l \in L\}$ is a set of
		\emph{transition functions}.
		\item $P(I,L)$ is a function that evaluates to a Boolean output,
		encoding the \emph{(good states) property}.
	\end{itemize}
\end{definition}

\noindent
By Def.~\ref{circuit} a circuit represents a hardware system in a fully 
symbolic form. It is very close to bit-level models encoded in 
AIGER~\cite{Biere-FMV-TR-11-2} format 
as well as word-level models in
BTOR2~\cite{DBLP:conf/cav/NiemetzPWB18} format - both used in 
hardware model checking competitions.
A state in the system is an assignment to the latches, and any state where 
all latches match their reset functions is an initial state. In order to talk 
about the reset functions of a subset of latches $L''\subseteq L$, 
we also write 
\[R(L'')=\bigwedge\limits_{l\in {L''}} (l\simeq r_l(L)).\] 
The inputs 
represent the 
environment, which can be assigned any value thus bringing 
non-determinism to the circuit behaviour.
The following four definitions are adapted from our previous work \cite{DBLP:conf/cav/YuBH20}
for completeness of exposition.

\begin{definition}[Unrolling]\label{def-unrolling}
	For an unrolling depth $m\in \mathbb{N}$, the \emph{unrolling} of a 
	circuit $C=(I, 
	L, R, F, P)$ of 
	length 
	$m$ is 
	defined as $U_m=   \bigwedge\limits_{i\in [0,m)} 
	(L_{i+1}\simeq 
	F(I_i, 
	L_i)) $.
\end{definition}

\noindent
For clarity, we use subscript such as $L_i$ for the temporal direction 
(\textit{e.g.,} in the context of unrolling), and 
later in the paper we use superscript for 
the spatial direction.

	\begin{definition}[Inductive invariant]\label{ind}
	Given a circuit $C$ with a property $P$, $\phi(I,L)$ is an
	\emph{inductive invariant} in 
	$C$ if and only if the following conditions hold: 
	
	\begin{enumerate}
		\item $ R(L)\Rightarrow \phi(I,L)$,
		\hfill``initiation''~~
		\item $\phi(I,L) \Rightarrow P(I,L)$, and
		\hfill``consistency''~~
		\item $U_1 \wedge \phi(I_0, L_0) \Rightarrow \phi(I_1, L_1).$
		\hfill``consecution''~~
	\end{enumerate}
	
\end{definition}

\noindent
As a generalisation of the notion of an inductive invariant, $k$-induction 
checks $k$ steps of unrolling instead of 
1. In the following, to verify that a property is an inductive invariant, 
we consider it as the special case of $k$-induction with $k=1$ and 
$\phi(I,L) 
= P(I,L)$.

	\begin{definition}[$k$-induction]\label{kind}
	Given a circuit $C$ with a property $P$, $P$ is called
	\emph{$k$-inductive} in 
	$C$ if and only if the following two conditions hold: 
	
	\begin{enumerate}
		\item\label{bmc} $ U_{k-1} 
		\wedge 
		R(L_0)
		\Rightarrow \bigwedge\limits_{i\in 
			[0,k)}P(I_i, 
		L_i)$, and
		\hfill``BMC''~~
		\item\label{cons} $U_k\wedge \bigwedge\limits_{i\in 
			[0,k)}P(I_i, 
		L_i)\Rightarrow 
		P(I_k, L_k)$.
		\hfill``consecution''~~
	\end{enumerate}

\end{definition}

\begin{definition}[Combinational extension]\label{ext} \leavevmode\\
	A circuit $C'=(I', L', R', F', P')$ combinationally extends a circuit $C=(I, 
	L, R, F, P)$ \;if\; $I=I'$ and $L\subseteq L'.$
\end{definition}

\section{Certification}~\label{certification}

In this section we introduce and formalise our certification approach which 
reduces 
the certification problem to six SAT checks and one polynomial 
stratification check. 

The certification approach is outlined in 
Fig.~\ref{outline}. 
Intuitively, a \emph{witness circuit} is generated from a 
given 
value of $k$ (provided by the model checker) and a model (either 
bit-level or word-level). The witness 
circuit simulates the original circuit while allowing more behaviours (we 
formally define 
it as the \textit{stratified simulation relation}). In practice, the witness 
circuit would be required to be provided by model 
checkers as the certificate in hardware 
model checking competitions.
To verify 
the stratified simulation relation between the two circuits, three conditions 
are generated as SAT formulas (listed in Def.~\ref{sim}).

We also perform a 
polynomial-time 
stratification 
check on the witness circuit. The check requires that the definition of the 
reset function is stratified, \textit{i.e.}, no cyclic dependencies between the 
reset definitions of the variables exist. This is the case
for all hardware model checking competition benchmarks. Even though cyclic
definitions 
have been the subject of study in several
papers~\cite{DBLP:journals/tcad/Malik94,DBLP:conf/iccad/JiangMB04,riedel2004cyclic},
 they are
 usually avoided due
to the complexity of their analysis and subtle effects on semantics.

The 
approach 
in \cite{DBLP:conf/cav/YuBH20} can handle cyclic resets but at the cost of 
QBF quantification, and 
thus \cite{DBLP:conf/cav/YuBH20} not being able to be efficiently adapted 
to the context of 
word-level verification. Furthermore, the witness circuit includes an 
inductive invariant which serves as a proof certificate, which is verified by 
another three SAT checks as defined in Def.~\ref{ind} ($\varphi_{init}, 
\varphi_{cons}, \varphi_{consec}$). Certification is successful if all seven 
checks 
pass.

We 
begin by defining 
stratified reset 
functions.

	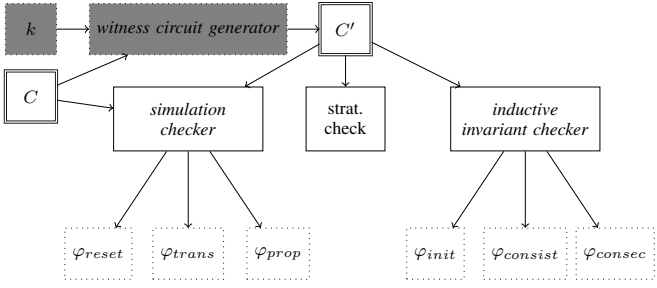
\begin{figure}
	\centering
	\scalebox{0.85}{
		\begin{tikzpicture}
		[node distance = 5em]
		\node [is_state]
		[]
		(S70)
		{$k$};
		\node [state]
		[below of=S70, node distance = 3em]
		(S0)
		{$C$};
		
		\node [is_state]
		[right of=S70, node distance = 7em]
		(S1)
		{ \textit{witness circuit generator}};
		\node [state]
		[right of=S1, node distance = 7em]
		(S2)
		{$C'$};
		
		\node [rstate]
		[below of=S1, node distance = 4em]
		(S4)
		{\textit{simulation checker}};
		\node [rstate]
		[right of=S4, node distance = 15em]
		(S3)
		{\textit{inductive invariant checker}};
		\node [is_state2]
		[below of=S4, node distance = 6em]
		(S5)
		{$\varphi_{trans}$};
		\node [is_state2]
		[left of=S5, node distance = 4em]
		(S6)
		{$\varphi_{reset}$};
		\node [is_state2]
		[right of=S5, node distance = 4em]
		(S7)
		{$\varphi_{prop}$};	
		\node [is_state2]
		[below of=S3, node distance = 6em]
		(S8)
		{$\varphi_{consist}$};	
		\node [is_state2]
		[left of=S8, node distance = 4em]
		(S9)
		{$\varphi_{init}$};	
		
		\node [is_state2]
		[right of=S8, node distance = 4em]
		(S10)
		{$\varphi_{consec}$};	
		\node [rstate2]
		[below of=S2, node distance = 4em]
		(S71)
		{strat. check};	
		
		\draw[->]
		(S0) to [] node[left] {} 
		(S1);
		\draw[->]
		(S70) to [] node[left] {} 
		(S1);
		\draw[->]
		(S1) to [] node[left] {} 
		(S2);
		\draw[->]
		(S0) to [] node[left] {} 
		(S4);
		\draw[->]
		(S2) to [] node[left] {} 
		(S4);
		\draw[->]
		(S2) to [] node[left] {} 
		(S3);
		\draw[->]
		(S4) to [] node[left] {} 
		(S6);
		\draw[->]
		(S4) to [] node[left] {} 
		(S7);
		\draw[->]
		(S3) to [] node[left] {} 
		(S8);
		\draw[->]
		(S3) to [] node[left] {} 
		(S9);
		\draw[->]
		(S3) to [] node[left] {} 
		(S10);
		\draw[->]
		(S4) to [] node[left] {} 
		(S5);
		\draw[->]
		(S2) to [] node[left] {} 
		(S71);

		\end{tikzpicture}
	}
	\caption{\label{outline} 
		An outline of the certification approach. Given some value of $k$ and 
		a model $C$, $C'$ is the resulting witness circuit. The coloured area is 
		specific to our approach for $k$-induction, and the rest corresponds 
		to the general certification flow.}
\end{figure}

\begin{definition}{(Dependency graph.)}
	Given a set of latches $L$ and a set of reset functions $R=\{r_l \mid l\in 
	L\}$, 
	the dependency graph 
	$G_R$ has latch variables $L$ as nodes and
	contains a directed edge $(a,b)$ from $a$ to $b$ iff $a\in \vars(r_b)$ 
	and $r_b \neq b$.
\end{definition}

\noindent
Latches with undefined reset value are common in applications.
We simply set $r_b = b$ for some uninitialised latch $b$
in such a case
(as in AIGER and BTOR) to avoid being required to reason about
ternary logic or partial functions.
Thus the syntactic condition ``$r_b \neq b$''
in the last definition simply avoids spurious self-loops in the dependency
graph for latches with undefined reset values.

\begin{definition}{(Stratified resets.)}~\label{stratreset}
	Given a set of latches $L$, and a set of reset functions $R=\{r_l \mid l\in 
	L\}$. 
	$R$ is said to be stratified iff $G_R$ is acyclic.
\end{definition}

\begin{definition}{(Stratified circuit.)}
	A circuit $C=(I, 
	L, R, F, P)$ is said to be stratified iff $R$ is stratified.
\end{definition}

The stratification check can be done in 
polynomial time using Def.~\ref{stratreset} and it is enforced 
syntactically in the two hardware description formats
AIGER and BTOR2.

\begin{definition}{(Stratified simulation.)}\label{sim}
	Given two stratified circuits $C$ and $C'$, where $C'$ combinationally 
	extends 
	$C$. There is a stratified simulation between $C'$ and $C$ iff,
\begin{enumerate}
	\item \label{def-cs-r}  $r_l(L)\equiv r'_l(L')$ for $l\in L$,
	\hfill``reset''~~
	\item \label{def-cs-f}	$f_l(I, L)\equiv f'_l(I,L')$ for $l\in L$, and
	\hfill``transition''~~
	\item \label{def-cs-p}  $P'(I,L')\Rightarrow
	P(I,L)$.
	\hfill``property''~~

\end{enumerate}
\end{definition}

\noindent
In essence, the crucial change here compared to the combinational 
simulation definition in ~\cite{DBLP:conf/cav/YuBH20} is the reset 
condition, whose simplification was possible under the stratification 
assumption. The 
above three conditions are encoded into SAT/SMT formulas 
($\varphi_{reset}, \varphi_{trans}, \varphi_{prop}$ in Fig.~\ref{outline}) 
which are then checked by a solver for validity. In the rest of the 
paper, we simply refer to the stratified simulation relation 
as 
simulation relation.

\begin{theorem}
Given two circuits $C=(I,L,R,F,P)$ and $C'=(I',L',R',F',P')$, where $C'$ 
simulates $C$. If $C'$ is safe, then $C$ is also safe. 
\end{theorem}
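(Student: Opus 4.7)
The plan is to argue the contrapositive: assuming $C$ is unsafe, produce a trace in $C'$ that violates $P'$, which contradicts safety of $C'$. A violating trace of $C$ consists of an initial state $s_0$ satisfying $R(L)$, a sequence of transitions driven by $F$ with inputs $I_0, \ldots, I_{n-1}$, ending in a state $s_n$ with $\neg P(I_n, s_n)$. I would construct an extended trace $s'_0, \ldots, s'_n$ in $C'$ with $s'_i|_L = s_i$ for every $i$ by induction on $i$, and then apply condition~(3) of stratified simulation at the final state. Note that because $I = I'$ (from combinational extension), the same inputs can be reused.

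The crux is the base case, where the stratification hypothesis is essential. Given $s_0$, I need to assign values to the extra latches $L' \setminus L$ so that the resulting $s'_0$ satisfies $R'(L')$. Because $C'$ is stratified, $G_{R'}$ is acyclic, so I can linearize $L'$ topologically and define $s'_0(l) = r'_l(s'_0)$ in that order; every variable that $r'_l$ depends on has already been fixed. For $l \in L$, the reset condition $r'_l(L') \equiv r_l(L)$ together with fixing $s'_0|_L = s_0$ reduces the defining equation $s'_0(l) = r'_l(s'_0)$ to $s_0(l) = r_l(s_0)$, which holds since $s_0$ is an initial state of $C$. Hence $s'_0$ is well-defined and belongs to the initial states of $C'$.

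For the inductive step, setting $s'_{i+1}(l) = f'_l(I_i, s'_i)$ for every $l \in L'$ directly yields a valid transition in $C'$; the transition condition ensures that for $l \in L$, $f'_l(I_i, s'_i) = f_l(I_i, s_i) = s_{i+1}(l)$, preserving $s'_{i+1}|_L = s_{i+1}$. Finally, $\neg P(I_n, s_n)$ together with the contrapositive of the property condition, namely $\neg P(I, L) \Rightarrow \neg P'(I, L')$, gives $\neg P'(I_n, s'_n)$, producing the desired counterexample trace in $C'$. The main obstacle is the initial-state construction: it is the only point at which the extra latches' values are not forced by the previous step, and without stratification one would have to existentially quantify over $L' \setminus L$, precisely the difficulty that motivated the stratification assumption in the first place.
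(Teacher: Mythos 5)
Your proposal is correct and follows essentially the same route as the paper's proof: build a counterexample trace of $C'$ from one of $C$, using stratification to satisfy $R'(L')$ on the extra latches, transition equivalence to extend the unrolling, and the property condition to transfer the violation. Your explicit topological-order construction of the initial state is just a more detailed rendering of the paper's (somewhat terser) appeal to acyclicity of the reset dependencies.
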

\begin{proof}
	We assume $C'$ is safe such that $R'(L')\wedge U'_i\Rightarrow 
	\bigwedge\limits_{j\in[0,i]}P'(I'_j,L'_j)$ holds for all $i\geq 0$. To show 
	that $C$ 
	is safe, 
	we do a proof by contradiction by assuming the safety check in $C$ fails 
	for 
	some  $m$ such that $R(L)\wedge U_m \wedge \neg P(I_m,L_m)$ has a satisfying 
	assignment $s$ and we fix this assignment. We can then construct a satisfying 
	assignment for $R(L)\wedge R'(L')$ (according to Def.~\ref{sim}) as under the 
	stratification assumption for resets, $R'(L')$ is 
	guaranteed to be satisfiable with $L_0$ being a subset of $L'_0$. This is 
	the case as the stratification of the reset function $R'$ does not allow 
	any 
	cyclic dependencies that could potentially make $R'$ unsatisfiable. 
	Moreover, by 
	the transition check in Def.~\ref{sim}, the unrolling 
	$U'_m$ of $C'$ is also satisfiable with the transition function $F$ applied on 
	the projected (or shared) components of both circuits.
	Since the rest of the latches in $L'$ follow a transition function, they are 
	also satisfiable 
	(transition functions 
	guarantee that there is always a successor state for all states). Therefore 
	the 
	initialised 
	unrolling $R'(L'_0)\wedge U'_m$ is satisfiable. Furthermore, by our 
	assumption, 
	$\bigwedge\limits_{i\in[0,m]}P'(I'_i,L'_i)$ holds. By 
	Def.~\ref{ext},         
	projected latches of $C'$ stay the same as $L_i$ for all
	$i\in[0,m]$; thus by
	Def.~\ref{sim} we have that
	$\bigwedge\limits_{i\in[0,m]}P(I_i,L_i)$ holds. Therefore we reach a 
	contradiction.
\end{proof}

Next, we introduce the witness circuit construction. This is similar to the 
construction in ~\cite{DBLP:conf/cav/YuBH20} but differs in several details, 
\textit{e.g.},  the reset function definition is stratified and significantly 
simplified compared to \cite{DBLP:conf/cav/YuBH20}.

\begin{definition}{(Witness circuit.)}\label{wit}
	Given a circuit $C=(I,L,R,F,P)$ and an integer $k\in \mathbb{N^+}$, its 
	witness 
	circuit 
	$C'=(I',L',R',F',P')$ is defined 
	as follows:
	\begin{enumerate}
		\item $I'=I $ (also referred to as $X^{k-1}$),
		\item $L'= L^{k-1} \cup\cdots\cup L^0 \cup 
		X^{k-2}\cup\cdots\cup 
		X^0
		\cup 
		 B$ where,
		\begin{itemize}
		\item $L^{k-1}=L,$ the other variables sets are copies of $I$ 
		and $L$ 
		respectively with the same variable domains.
		\item $B=\{b^{k-1},\cdots, 
		b^0\} $ are Booleans.
		\end{itemize}
		
		\item $R': $ \label{reset}
		\begin{itemize}
			\item for $l\in L^{k-1}, r'_l=r_l(L^{k-1}).$ 
			\item for $l\in L^0\cup \cdots\cup L^{k-2}\cup X^0\cup \cdots\cup 
			X^{k-2}, r'_l=l$.
			\item $r'_{b^{k-1}}=\top$.
			\item for $i\in [0,k-1), r'_{b^i}=\bot.$
		\end{itemize}
		\item $F':$\label{transition}
		\begin{itemize}
			\item for $l\in L^{k-1}, f'_l=f_l(I',L^{k-1}).$ 
			\item $f'_{b^{k-1}}=b^{k-1}.$
			\item for $i\in [0,k-1), l^i\in (L^i\cup X^i \cup \{b^i\}), f'_{l^i}=l^{i+1}$.
		\end{itemize}
		\item $P'=\bigwedge\limits_{i\in[0,4]} p_i(I',L')$ where
		\begin{itemize}
				
{
	\iftrue
	\def\LEFT#1{\hbox to 4em{\hfill$#1$}}
	\def\RIGHT#1{\kern-.25em\hbox to 3.5em{\hfill$#1$}}
	\def\RANGE#1{\kern-.25em\hbox to 3.5em{\hss$\scriptstyle #1$\hss}}
	\else
	\def\LEFT#1{#1}
	\def\RIGHT#1{#1}
	\def\RANGE#1{#1}
	\fi
	
	\smallskip
	\item $\LEFT{p_0(I',L')}=  \bigwedge\limits_{\RANGE{i\in[0,k-1)}}
	(b^{i} \rightarrow b^{i+1}) $.
	\item $\LEFT{p_1(I',L')}= \bigwedge
	\limits_{\RANGE{i\in[0,k-1)}}
	(b^i\rightarrow  (L^{i+1} \simeq F(X^{i}, L^{i})))$.
	\item $\LEFT{p_2(I',L')}= \bigwedge\limits_{\RANGE{i\in [0,k)}} (b^i 
	\rightarrow P(X^i,
	L^i))$.
	\item $\LEFT{p_3(I',L')} = 
	\bigwedge\limits_{\RANGE{i\in[1,k)}}((\neg b^{i-1}\wedge b^i)\rightarrow 
	R(L^i))$.
	\\[.25ex]
	\item $\LEFT{p_4(I',L')} = \RIGHT{b^{k-1}.}$
}

\end{itemize}
\end{enumerate}
\end{definition}

Here we extend a given circuit to a witness circuit, which has $k$ copies 
of the original latches and inputs, and additional $k$ latches of $B$ that we 
refer to as the initialisation bits. We refer to the $\{k-1\}th$ as the most 
recent, and the $0$th as the oldest. Intuitively the most recent copy unrolls 
in the same way as the original circuit, with the older copies copying the 
previous values of the younger copies. When all initialisation 
bits are $\top$, we say the machine has reached a ``full initialisation" 
state. 

\begin{lemma}~\label{strat}
 Given a circuit $C$ with reset function $R$ and its witness circuit $C'$ 
 with reset function $R'$. If $R$ is stratified, then $R'$ is also stratified.
\end{lemma}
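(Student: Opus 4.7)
The plan is to analyze the dependency graph $G_{R'}$ by case analysis on the disjoint classes of latches making up $L' = L^{k-1} \cup \cdots \cup L^0 \cup X^{k-2} \cup \cdots \cup X^0 \cup B$, and to show that every edge of $G_{R'}$ is either trivial (excluded) or lives entirely inside $L^{k-1}$ as a copy of an edge of $G_R$. Acyclicity of $G_{R'}$ will then follow directly from acyclicity of $G_R$.

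First I would examine the incoming edges at each node $b \in L'$, which depend only on the syntactic form of $r'_b$. For $b \in L^{k-1}$ we have $r'_b = r_b(L^{k-1})$, so $\vars(r'_b) \subseteq L^{k-1}$ and each incoming edge at $b$ corresponds one-to-one with an incoming edge of $b$ in $G_R$ (under the natural identification $L \leftrightarrow L^{k-1}$); moreover $r'_b \simeq b$ exactly when $r_b \simeq b$, so the exclusion clause in the definition of the dependency graph behaves identically in $G_R$ and $G_{R'}$. For $b \in L^0 \cup \cdots \cup L^{k-2} \cup X^0 \cup \cdots \cup X^{k-2}$ the reset is defined by $r'_b = b$, which triggers the ``$r_b \neq b$'' exclusion and therefore contributes no incoming edges at all. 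For $b \in B$ the reset is a constant ($\top$ or $\bot$), so $\vars(r'_b) = \emptyset$ and again there are no incoming edges.

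Next I would observe that this immediately bounds the outgoing edges as well: if a variable $a \in L'$ is the source of an edge $(a,b)$, then $a$ must appear in some non-trivial $r'_b$, which by the case analysis above forces $b \in L^{k-1}$ and hence $a \in L^{k-1}$. Consequently $G_{R'}$ is the disjoint union of (i) a subgraph on $L^{k-1}$ which is isomorphic to $G_R$, and (ii) a collection of isolated nodes on the other latches. Since $G_R$ is acyclic by assumption, $G_{R'}$ is acyclic as well, and $R'$ is stratified.

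The only delicate point, which I would take care to spell out, is the treatment of the syntactic exclusion $r_b \neq b$: without it, the entries $r'_b = b$ on the older copies and on the input copies would produce a spurious self-loop at every such latch and the lemma would be false. Everything else is a routine bookkeeping of which variables occur in which reset term, so I do not expect any real obstacle beyond making this correspondence explicit.
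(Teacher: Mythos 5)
Your proposal is correct and follows essentially the same route as the paper's proof: identify that the $B$ latches and the older copies contribute no nontrivial reset dependencies, so every edge of $G_{R'}$ lies inside $L^{k-1}$ where the graph coincides with $G_R$, and acyclicity transfers. Your version is merely more explicit about the role of the syntactic exclusion $r_b \neq b$ and about outgoing edges, which the paper leaves implicit.
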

\begin{proof}
 We assume $R$ is stratified. By Def.~\ref{wit}.~\ref{reset}, the initialisation  
 bits in $B$ all have constant initial values, and the copies of latches and 
 inputs are uninitialised. Therefore they have no incoming edges in  
 $G_{R'}$ and  we can remove them without removing any cycles in  
 $G_{R'}$. The remaining latches $L^{k-1}$ have the same reset functions 
 as in $C$, 
 which by 
 our assumption, are stratified. Therefore $R'$ is stratified.
\end{proof}

We now prove the stratified simulation relation between $C'$ and $C$.

\begin{theorem}\label{thm-kwc-cs}
	Given a circuit $C$ and its witness circuit $C'$. $C'$  
	simulates $C$.
\end{theorem}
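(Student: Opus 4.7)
The plan is to unpack the three conditions of stratified simulation from Def.~\ref{sim} and check each directly against the witness-circuit construction in Def.~\ref{wit}. First I would observe that $C'$ combinationally extends $C$ in the sense of Def.~\ref{ext}: by construction $I' = I$, and $L^{k-1} = L$ is one of the summands forming $L'$, so $L \subseteq L'$. This lets us legitimately ask about the stratified simulation relation.

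Next I would tackle the reset and transition conditions, both of which I expect to be essentially bookkeeping. For any $l \in L$, we have $l \in L^{k-1}$, and Def.~\ref{wit}.\ref{reset} sets $r'_l = r_l(L^{k-1})$, which is syntactically $r_l(L)$ since $L^{k-1} = L$; hence $r_l(L) \equiv r'_l(L')$. Similarly, Def.~\ref{wit}.\ref{transition} sets $f'_l = f_l(I', L^{k-1})$, which is $f_l(I, L)$; hence $f_l(I, L) \equiv f'_l(I, L')$. In both cases the added latches in $L' \setminus L$ simply do not occur, so equivalence on the shared variables is immediate.

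The property condition is the only one requiring a small argument, and I would single out the conjuncts $p_4$ and $p_2$ of $P'$. Specifically, $p_4$ forces $b^{k-1}$ to be $\top$, and instantiating the $i = k-1$ conjunct of $p_2$ yields $b^{k-1} \rightarrow P(X^{k-1}, L^{k-1})$. Combining these gives $P(X^{k-1}, L^{k-1})$, and since $X^{k-1} = I' = I$ and $L^{k-1} = L$, this is exactly $P(I, L)$. Thus $P'(I, L') \Rightarrow P(I, L)$.

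Since all three conditions of Def.~\ref{sim} hold and both circuits are stratified (the stratification of $R'$ is guaranteed by Lemma~\ref{strat}), the stratified simulation relation holds between $C'$ and $C$. I do not expect any real obstacle here: the witness circuit was engineered precisely so that its ``most recent'' copy $L^{k-1}$ mirrors $C$ directly, and the conjunct $p_4$ forcing $b^{k-1}$ true makes the property conjunct $p_2$ at index $k-1$ collapse to $P$ on the shared variables. The only subtlety worth flagging is that this direction of the proof does not use $p_0$, $p_1$, $p_3$, or the older copies at all — those play their role in establishing the inductive invariant, not the simulation.
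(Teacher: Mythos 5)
Your proof is correct and follows essentially the same route as the paper's: establish the combinational extension via $I'=I$ and $L=L^{k-1}\subseteq L'$, verify the reset and transition conditions directly from Def.~\ref{wit} (with stratification of $R'$ from Lemma~\ref{strat}), and derive the property condition from $p_4$ together with the $i=k-1$ conjunct of $p_2$. Your version merely spells out the last step in slightly more detail than the paper does.
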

\begin{proof}
The inputs stay unchanged in the witness circuit, as well as the 
latches $L$  
(which are mapped to $L^{k-1}$), therefore $C'$ combinationally extends 
$C$. By Def.~\ref{wit}.\ref{reset}, for $l\in L^{k-1}, r'_l=r_l(L^{k-1})$ 
and by Lemma~\ref{strat} $R'$ is stratified thus satisfiable; therefore
the reset condition in Def.~\ref{sim} is fulfilled. By 
Def.~\ref{wit}.\ref{transition},  $ 
f'_l=f_l(I',L^{k-1})$ for 
$l\in L^{k-1}$. Thus the 
transition 
condition passes.  The property condition follows as $p_4$ and $p_2$ of 
$P'$ 
together imply $P$.
\end{proof}

Next we prove that under the assumption that $C'$ stratified simulates 
$C$, 
we 
have that $C$ is $k$-inductive iff $C'$ is 1-inductive. We proceed by first 
considering the BMC check.

\begin{lemma}~\label{lem2}
	If the BMC check for the unrolling of length $k-1$ passes in $C$, the 
	BMC check also passes for the unrolling of length $0$ in its witness 
	circuit $C'$.
\end{lemma}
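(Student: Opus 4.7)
The plan is to unpack what each side of the implication says and then verify the conclusion conjunct by conjunct. The BMC check of length $0$ in $C'$ reduces to $R'(L')\Rightarrow P'(I',L')$ (since $U'_{0}$ is the empty conjunction), so I would fix an assignment satisfying $R'(L')$ and analyse what it forces: by Def.~\ref{wit}.\ref{reset}, the initialisation bits must satisfy $b^{k-1}=\top$ and $b^i=\bot$ for $i<k-1$, the latches in $L^{k-1}$ satisfy $R(L^{k-1})$, and all the remaining copies $L^{0},\ldots,L^{k-2}$ and $X^{0},\ldots,X^{k-2}$ are uninitialised and hence unconstrained.

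I would then run through $P'=p_0\wedge p_1\wedge p_2\wedge p_3\wedge p_4$. The conjuncts $p_0$ and $p_1$ hold vacuously because every antecedent $b^i$ with $i<k-1$ is false; $p_2$ is vacuous at those same indices, and $p_4=b^{k-1}$ holds directly from the reset. For $p_3$ the only index that matters, when $k\geq 2$, is $i=k-1$: there $\neg b^{k-2}\wedge b^{k-1}$ is true and the required $R(L^{k-1})$ is provided by the reset condition; for $k=1$ the conjunction is empty.

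The only conjunct that genuinely consumes the hypothesis is $p_2$ at $i=k-1$, which asks for $P(X^{k-1},L^{k-1})=P(I',L^{k-1})$. To extract this from the BMC hypothesis of length $k-1$ in $C$, I would extend the given assignment by computing successor latch values step by step via $F$ using fresh inputs, identifying $L_0$ with $L^{k-1}$ and $I_0$ with $I'$, so that $U_{k-1}\wedge R(L_0)$ becomes satisfied. The hypothesis then yields $\bigwedge_{j\in[0,k)}P(I_j,L_j)$, and specialising to $j=0$ gives the desired $P(I',L^{k-1})$.

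I do not anticipate a real obstacle: the whole argument is case analysis once the values of the $b^i$ after reset are pinned down. The one step that demands a little care is the extension-of-assignment needed to instantiate the BMC hypothesis in $C$ from the state $L^{k-1}$ visible in $C'$; handling the degenerate case $k=1$ uniformly is also worth a sentence but requires no new idea.
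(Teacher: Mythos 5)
Your proposal is correct and follows essentially the same route as the paper's proof: pin down the values forced by $R'$ (the initialisation bits and $R(L^{k-1})$), observe that $p_0,p_1,p_4$ and all but one index of $p_2,p_3$ are immediate, and discharge $p_2$ at $i=k-1$ by embedding the reset state as $L_0=L^{k-1}$, $I_0=I'$ in a full unrolling of $C$ and invoking the BMC hypothesis at $j=0$. The paper phrases this as a proof by contradiction and is terser about extending the assignment to satisfy $U_{k-1}$, but the content is the same; your explicit treatment of that extension and of the $k=1$ case is if anything slightly more careful.
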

\begin{proof}
Assume $U_{k-1}\wedge R(L_0)\Rightarrow \bigwedge\limits_{i\in 
	[0,k)}P(I_i,L_i)$ such that the initialised unrolling of $C$ is safe. We do a 
	proof by contradiction by assuming the BMC check fails in $C'$ such that 
	$R'(L'_0)\wedge\neg P'(I'_0,L'_0)$ has a satisfying assignment $s$. We 
	can then construct a satisfying asignment for $R(L_0)$ by using the 
	mapping $L_0=L^{k-1}_0$. By our assumption of the BMC check passing 
	in $C$, we have $P(I_0,L_0)$. According to the reset functions in 
	Def.~\ref{wit}, $b^{k-1}_0=\top$, and the rest of initialisation bits are set 
	to $\bot$. The premise of 
$p_2(I'_0,L'_0)$ is only satisfied for 
$b^{k-1}_0$, and with the same assignment satisfying 
$P(X^{k-1}_0,L^{k-1}_0)$, $p_2(I'_0,L'_0)$ is also 
satisfied.
Lastly, the premise of 
$p_3(I'_0,L'_0)$ is only satisfied for $\neg b^{k-2}_0\wedge b^{k-1}_0$, 
and since $R(L^{k-1}_0), p_3(I'_0,L'_0)$ is satisfied. Therefore we have 
$P'(I'_0,L'_0)$, and thus a contradiction is reached.
\end{proof}

Note that all that is proved here is that
if $C$ is safe for $k$ time steps then the circuit $C'$ is safe for one time 
step. 
This differs from the proof strategy in~\cite{DBLP:conf/cav/YuBH20} 
(Lemmas 3-5), and allows for the 
simpler reset definition in this paper.

\begin{lemma}\label{lem3}
	If the property $P$ is $k$-inductive in $C$, the consecution check for 
	the unrolling of length 1 passes 
	in its witness circuit $C'$.
\end{lemma}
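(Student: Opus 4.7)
The plan is to fix an assignment satisfying $P'(I'_0, L'_0)$, compute $L'_1$ via $F'$, and then verify each of the five conjuncts of $P'(I'_1, L'_1)$ in turn. The key calculation is that $F'$ acts as a one-step shift on the older copies ($L^i_1 = L^{i+1}_0$, $X^i_1 = X^{i+1}_0$, $b^i_1 = b^{i+1}_0$ for $i \in [0, k-1)$), fixes $b^{k-1}_1 = b^{k-1}_0$, and applies the original transition $L^{k-1}_1 = F(I'_0, L^{k-1}_0)$ to the top copy.

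Under this shift, each indexed conjunct of $p_0$, $p_1$, $p_3$ at time $1$ rewrites to the next-index conjunct of the same $p_j$ at time $0$, plus a trivial boundary case at the top index (a tautology $b^{k-1}_0 \rightarrow b^{k-1}_0$ in $p_0$, a reflexive equality in $p_1$, and a vacuous $\neg b^{k-1}_0 \wedge b^{k-1}_0$ premise in $p_3$). The conjunct $p_4$ at time $1$ reads $b^{k-1}_1 = b^{k-1}_0$, which is $\top$ by $p_4$ at time $0$. The same shift disposes of the $i \in [0, k-1)$ conjuncts of $p_2$, so the entire content of the lemma is concentrated in the single remaining obligation coming from the $i = k-1$ conjunct of $p_2$ at time $1$: since $p_4$ forces its premise $b^{k-1}_1 = \top$, I must prove $P(I'_1, F(I'_0, L^{k-1}_0))$.

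To discharge this obligation I will case-split on the initialisation bits at time $0$. Since $b^{k-1}_0 = \top$ by $p_4$, let $j$ be the smallest index with $b^j_0 = \top$; by $p_0$ at time $0$, the bits $b^j_0, \dots, b^{k-1}_0$ are then all true. If $j = 0$, the conjuncts of $p_1$ and $p_2$ at time $0$, together with $L^{k-1}_1 = F(I'_0, L^{k-1}_0)$, assemble a valid length-$k$ unrolling of $C$ whose first $k$ states $L^0_0, \dots, L^{k-1}_0$ all satisfy $P$, so the consecution clause of Def.~\ref{kind} yields $P$ at the $(k+1)$-th state $L^{k-1}_1$. If $j \geq 1$, the $j$-th conjunct of $p_3$ at time $0$ fires (its premise $\neg b^{j-1}_0 \wedge b^j_0$ holds), giving $R(L^j_0)$; the same chain then becomes a valid unrolling of $C$ of length $k - j \leq k - 1$ starting from an initial state, and the BMC clause of Def.~\ref{kind} guarantees $P$ along the whole chain, in particular at $L^{k-1}_1$.

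The hardest step is this topmost $p_2$ obligation itself: it cannot be reduced to another conjunct of $P'(I'_0, L'_0)$ and is the only place where the external $k$-induction hypothesis on $C$ must be invoked. The BMC-versus-consecution dichotomy in the case split mirrors exactly the two clauses of Def.~\ref{kind}, reflecting whether the witness circuit at time $0$ is already fully initialised or still in its ``ramp-up'' phase. The stratification assumption contributes here by ensuring that whenever $R(L^j_0)$ is invoked as a hypothesis it is automatically satisfiable, avoiding the QBF reasoning that the predecessor construction required.
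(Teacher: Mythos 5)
Your proof is correct and follows essentially the same route as the paper's: the same case split on whether the witness circuit is fully or only partially initialised at time $0$, invoking the consecution clause of Def.~\ref{kind} in the former case and the BMC clause in the latter, with your ``shift'' bookkeeping being a cleaner direct-proof packaging of what the paper argues by contradiction. (One small inaccuracy in your closing remark: stratification plays no role in this direction of the argument --- $R(L^j_0)$ is a fact derived from $p_3$ about a concrete fixed assignment, not a hypothesis whose satisfiability must be guaranteed.)
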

\begin{proof}
	Based on our construction of the witness circuit, there are two possibilities of a 
	state: 1) $C'$ is partially initialised; or 2) all 
	initialisations bits are set. In the following proof, we consider the two 
	cases separately. In both 
	cases, we first assume $P$ is 
	$k$-inductive in $C$.
	
	We start by considering the case where $L'_0$ is partially initialised. Let $U'_1$ be 
	the unrolling of $C'$,  
	$m\in[1,k)$ be the lowest index such that 
	$b^0_0,...,b^{m-1}_0$ are set to $\bot$, while $b^m_0,...,b^{k-1}_0$ are 
	set 
	to $\top$, which follows from $p_0$ and $p_4$. We do a proof by 
	contradiction, and assume there is a satisfying assignment
	$s$ of the negation of the consecution check formula $U'_1\wedge 
	P'(I'_0,
	L'_0)\wedge \neg P'(I'_1,L'_1)$. Based on $p_3(I'_0,L'_0)$ we have 
	$R(L^m_0)$.
	We also have $L^{i+1}_0\simeq F(X^i_0, L^i_0)$ for $i\in[m,k-1)$, based on  
	$p_1(I'_0,L'_0)$. 
	Furthermore, $U'_1$ implies 
	$L'_1\simeq F'(I'_0,L'_0)$, and by the transition function construction in 
	Def.~\ref{wit}, 
	$L^{k-1}_1\simeq F(X^{k-1}_0,L^{k-1}_0)$. Therefore we can construct a 
	satisfying assignment for $U_{k-1} \wedge R(L_0)$ where  
	$I_{i-m}=X^i_0, 
	L_{i-m}=L^i_0$ for all $i\in[m,k)$, and $I_{k-m}=I'_1, 
	L_{k-m}=L^{k-1}_1$. The variables in $U_{k-1}$ that are not yet defined 
	have a satisfying assignment since we use transition functions. By our 
	assumption that the BMC check passes in $C$, we have 
	$P(X^i_0,L^i_0)$ for all 
	$i\in [m,k)$ and $P(I'_1, L^{k-1}_1)$. We then proceed to prove $P'(I_1', L_1')$ 
	is indeed satisfied. Similar to 
	our proof 
	in Lemma~\ref{lem2}, based on the definition of transition functions in Def.~\ref{wit}, 
	$b^{i}_1=\top$ for all $i\in[m,k)$ while $b^i_1=\bot$ for all $i\in[0,m)$. 
	Additionally, $X^{i}_1=X^{i+1}_0, L^i_1=L^{i+1}_0$ for $i\in[0,k)$. The 
	rest 
	of the proof follows the same logic as Lemma~\ref{lem2}, for showing 
	$P'(I_1', L_1')$ is satisfied. We then reach a contradiction here.
	
	Now we continue to prove the second scenario where all initialisation bits are 
	$\top$. Again, let $U'_1$ be the unrolling of $C'$ with 
	$B_0=\{b^0_0,...,b^{k-1}_0\}$ all set to 
	$\top$. We again do a proof by 
	contradiction assuming there is a satisfying assignment
	$s$ of $U'_1\wedge 
	P'(I'_0,
	L'_0)\wedge \neg P'(I'_1,L'_1)$. By the transition property $p_1(I'_0, 
	L'_0)$, 
	the 
	components follow the transition function $F$, such that $L^{i+1}_1\simeq 
	F(X^{i}_0,L^{i}_0)$ for all $i\in[0,k-1)$. Similar to our argument above, $U'_1$ 
	implies $L^{k-1}_1\simeq F(I'_0, 
	L^{k-1}_0)$. We also have $\bigwedge\limits_{i\in[0,k)}P(X^i_0,L^i_0)$ 
	based on $p_2(I'_0,L'_0)$ and the values of $B_0$ (\textit{i.e}., the values 
	of $B$ at timepoint 0). Thus we have a satisfying assignment for 
	$U_k\wedge 
	\bigwedge\limits_{i\in[0,k)}P(I_i,L_i)$ 
	with $L_i= L^{i}_0 
	\wedge I_i= 
	X^i_0$ for all $i\in[0,k)$ and $I_k = X^{k-1}_1, L_k=L^{k-1}_1$. Based 
	on our 
	assumption that the consecution of $C$ passes, we have 
	$P(I'_1,L^{k-1}_1)$. Following the same reasoning as in 
	the first case, after one transition, $b^i_1=\top$ for all 
	$i\in[0,k)$, and $X^{i}_1=X^{i+1}_0, L^i_1=L^{i+1}_0$ for $i\in[0,k-1)$. 
	We will now 
	show $P'(I_1', L_1')$ is satisfied. The sub-property
	$p_2(I'_1,L'_1)$ is satisfied as we have proved $P(X^i_1, L^i_1)$ for all 
	$i\in[0,k)$. As $U_k$ 
	is satisfied which implies $L^{i+1}_1\simeq F(X^i_1, 
	L^{i}_1)$ for $i\in[0,k)$, $p_1(I'_0,L'_0)$ is preserved. Based on the 
	values of $B_1$ (\textit{i.e}., the values of $B$ at timepoint 1), 
	$p_0(I'_1,L'_1),p_3(I'_1,L'_1),p_4(I'_1,L'_1)$ are satisfied immediately. We 
	conclude the $P'(I'_1, L'_1)$ 
	is 
	satisfied and we reach a contradiction. 
\end{proof}

Based on Lemma~\ref{lem2} and ~\ref{lem3}, we immediately reach the 
following 
corollary.

\begin{coro}\label{co1}
Given a circuit $C$ and its witness circuit $C'$. If the property $P$ is 
$k$-inductive in $C$, $P'$ is 1-inductive in $C'$.
\end{coro}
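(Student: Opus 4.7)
The plan is to unwind the definition of $1$-inductiveness for $P'$ in $C'$ and show that each of its two nontrivial obligations is discharged by one of the preceding lemmas. By Definition~\ref{kind} specialised to $k=1$ and with the invariant taken to be $P'$ itself, being $1$-inductive in $C'$ amounts to the BMC condition $R'(L'_0)\Rightarrow P'(I'_0,L'_0)$ together with the consecution condition $U'_1\wedge P'(I'_0,L'_0)\Rightarrow P'(I'_1,L'_1)$.

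First I would note that the hypothesis ``$P$ is $k$-inductive in $C$'' unpacks, by Definition~\ref{kind}, into precisely the two obligations (i) $U_{k-1}\wedge R(L_0)\Rightarrow\bigwedge_{i\in[0,k)}P(I_i,L_i)$ and (ii) $U_k\wedge\bigwedge_{i\in[0,k)}P(I_i,L_i)\Rightarrow P(I_k,L_k)$. The first is exactly the assumption of Lemma~\ref{lem2}, which directly yields the BMC obligation for $C'$. The second, combined with the first, is exactly the assumption of Lemma~\ref{lem3}, which directly yields the consecution obligation for $C'$. Concatenating these two implications gives that $P'$ satisfies both conditions of $1$-induction in $C'$.

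Because both implications are already established, there is essentially no obstacle: the proof is a one-line deduction that invokes Lemma~\ref{lem2} and Lemma~\ref{lem3} in sequence. The only minor subtlety worth stating explicitly is that the ``consistency'' clause of Definition~\ref{ind}, namely $\phi(I,L)\Rightarrow P'(I,L)$, is vacuously satisfied here since we take $\phi=P'$, so no additional work is needed for that bullet. I would therefore present the proof as a two-sentence corollary-style argument that simply cites the two lemmas.
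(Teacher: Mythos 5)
Your proposal is correct and matches the paper exactly: the paper simply states that the corollary follows immediately from Lemma~\ref{lem2} (giving the BMC obligation for $C'$) and Lemma~\ref{lem3} (giving the consecution obligation for $C'$), which is precisely your two-step deduction. Your added remark that Lemma~\ref{lem3} requires the full $k$-inductiveness hypothesis (not just consecution) is accurate and consistent with how the paper states and proves that lemma.
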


We continue to prove the reverse direction. Note that we are using the 
1-inductiveness of $C'$ in the following Lemma as an assumption instead 
of 
the BMC property as used in the Lemma~6 of \cite{DBLP:conf/cav/YuBH20}.

\begin{lemma}\label{lem1}
	If the property $P'$ is $1$-inductive in $C'$, then the BMC 
	check passes in $C$ for the unrolling 
	of length $k-1$.
\end{lemma}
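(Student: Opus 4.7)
The plan is to prove the statement directly, by embedding an arbitrary initialised trace of $C$ of length $k-1$ into a reset-initialised run of $C'$ and exploiting safety of $C'$, which follows from 1-inductivity of $P'$ by a routine induction on time. The central structural observation I would use is that the initialisation bit $b^{k-1}$ latches to $\top$: its reset value is $\top$ and its transition is $f'_{b^{k-1}} = b^{k-1}$, so $b^{k-1}_j = \top$ for every $j \geq 0$ along any reset-initialised $C'$-run. Consequently, whenever $P'$ holds at time $j$ in such a run, the sub-property $p_2$ immediately yields $P(X^{k-1}_j, L^{k-1}_j)$.

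First I would lift 1-inductivity of $P'$ to full safety of $C'$ via a short induction on $j$: initiation covers $j=0$, and consecution combined with $U'_j$ handles the step. Second, given any $C$-trace $s_0, \iota_0, \ldots, s_{k-1}, \iota_{k-1}$ with $R(s_0)$ and $s_{i+1} \simeq F(\iota_i, s_i)$ for $i \in [0, k-1)$, I would construct a matching reset-initialised $C'$-run by setting $L^{k-1}_0 := s_0$ (legitimate because $r'_l = r_l$ for $l \in L^{k-1}$, so the reset on $L^{k-1}$ is exactly $R$), giving arbitrary values to the uninitialised latches $L^i$ and input copies $X^i$ for $i < k-1$, and feeding the fresh input $X^{k-1}_j := \iota_j$ at each step. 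Since $L^{k-1}$ updates by $F$, an easy induction gives $L^{k-1}_j = s_j$ for all $j \in [0, k)$. Applying safety of $C'$ at each such $j$ then yields $P(\iota_j, s_j)$ via $p_2$ and $b^{k-1}_j = \top$, which is precisely the BMC conclusion in $C$.

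The main subtlety is ensuring that this reset-initialised $C'$-run actually exists simultaneously with the prescribed $L^{k-1}_0 = s_0$. Here Lemma~\ref{strat} is essential: stratification of $R'$ guarantees that the reset system is jointly satisfiable, and since the constant resets on the $b^i$ and the identity resets on the older copies of $L^i$ and $X^i$ impose no constraint on $L^{k-1}_0$, the only real requirement is $R(L^{k-1}_0)$, which the $C$-trace provides. A proof by contradiction in the style of Lemma~\ref{lem2} would also go through but would require more bookkeeping about the initialisation bits.
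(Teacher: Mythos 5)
Your proposal is correct, but it takes a genuinely different route from the paper. The paper argues by contradiction: it fixes the first index $m$ at which $P$ fails along an initialised $C$-trace, splits into the cases $m=0$ and $m>0$, and packs the prefix $L_0,\dots,L_{m-1}$ \emph{spatially} into the copies $L^{k-m},\dots,L^{k-1}$ at a single time point of $C'$, so that one application of the initiation check (for $m=0$) or of the one-step consecution check (for $m>0$) already forces $P(I_m,L_m)$. You instead first upgrade $1$-inductiveness of $P'$ to unbounded safety of $C'$ by induction on time, and then embed the $C$-trace \emph{temporally} into the most recent copy $L^{k-1}$ over $k$ steps of $C'$, reading off $P(\iota_j,s_j)$ at each step from the persistence of $b^{k-1}$ and $p_2$. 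Your version avoids the case split and the multi-copy bookkeeping, and is essentially a specialisation of the paper's Theorem~1 (safety of $C'$ plus simulation implies safety of $C$) to the first $k$ steps --- you could in fact shorten it further by citing Theorem~1 together with Theorem~\ref{thm-kwc-cs}. What the paper's argument buys in exchange is that it needs only a single application of each of the two $1$-induction checks and no auxiliary safety lemma, keeping the proof in the same single-step style as the surrounding lemmas. One minor remark: invoking Lemma~\ref{strat} for the joint satisfiability of $R'$ with $L^{k-1}_0=s_0$ is more than you need, since the explicit form of $R'$ (constants on $B$, identities on the older copies, and $R$ on $L^{k-1}$ only) makes that satisfiability immediate; this is not an error, just an overkill citation.
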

\begin{proof}
	We assume $P'$ is 1-inductive such that the consecution check 
	($U'_1\wedge
	P'(I'_0,L'_0)\Rightarrow P'(I'_1,L'_1)$) and the BMC check 
	($R'(L_0)\Rightarrow P'(I'_0, L'_0)$) pass. We then do a proof by 
	contradiction 
	by
	assuming the BMC check in $C$ does not pass, meaning $R(L_0)\wedge
	U_{k-1}\wedge \neg \bigwedge\limits_{i\in[0,k)}P(I_i,L_i)$ has a satisfying
	assignment. We fix one such assignment and let $m\in [0,k)$ be the index for
	which $R(L_0)\wedge
	U_{k-1} \wedge \bigwedge\limits_{i\in[0,m)}P(I_i,L_i)\wedge\neg P(I_m,L_m)$ is
	satisfied. We consider two separate cases based on the value of $m$. 
	In the case where $m> 0$, we construct an assignment 
	such that $U'_1\wedge
	P'(I'_0,L'_0)\wedge\neg P'(I'_1,L'_1)$ is satisfied.
	Let $L_{0}^{k-m+i}=L_i, X_{0}^{k-m+i}=I_i, b_{0}^{k-m+i} = \top,  
	\text{ for } i \in [0,m), 
	b_{0}^{j} = \bot \text{ for } j \in [0, k-m)$. The assignment 
	of the initialisation bits satisfies $p_0$ as well as $p_4$. 
	By Def.~\ref{def-unrolling} we have $L_{i+1}\simeq F(I_{i}, L_{i})$ for all $i 
	\in[0,m)$, which satisfies the transition property. As 
	$\bigwedge\limits_{i\in[0,m)}P(I_i,L_i)$ is satisfied, $p_2$ is also satisfied. 
	Furthermore, $p_3$ is satisfied as $b^{k-m}_0$ is the oldest 
	initialisation 
	bit that is set, and $L_0^{k-m}=L_0$. The same assignment satisfies 
	$R(L_0)$.
	Therefore $P'(I_{0}', L_{0}')$ is satisfied. By 
	our assumption that the consecution check in $C'$ passes, we have 
	$P'(I_{1}', L_{1}')$. By 
	Def.~\ref{wit}, $L_1^{k-1}\simeq F(I'_0,L_0^{k-1})$, thus under our construction 
	$L_1^{k-1}=L_m.$ By Def.~\ref{wit}, 
	$P'(I_{1}', L_{1}')$ implies $P(I'_1, L^{k-1}_1)$.
	We let $I'_1=I_m$, and have $P(I_{m}, L_{m})$ thus the 
	contradiction 
	follows. We then consider the second case where $m=0$ such that 
	$R(L_0)\wedge U_{k-1}\wedge \neg P(I_0,L_0)$ is satisfied, where an 
	initial state in the original circuit is a bad state.  Based on $R(L_0)$ and 
	the fact that $R'$ is stratified, we can 
	construct a satisfying assignment for $R'(L'_0)$ with $L_0^{k-1}=L_0,
	I'_0=I_0$ and the rest uninitialised. Under our assumption that the BMC 
	check passes in 
	$C'$, we have $P'(I'_0,L'_0)$, and by $p_3$ in Def.~\ref{wit} and under 
	the variable 
	mapping, we have 
	$P(I_0,L_0)$. Therefore we reach a contradiction.
\end{proof}
\begin{lemma}\label{lem4}
If the consecution check for the unrolling of length $1$ passes in the 
witness circuit $C'$, 
then the 
consecution check also passes for the unrolling of length $k$ in 
$C$.
\end{lemma}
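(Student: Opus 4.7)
The plan is a proof by contradiction that mirrors the structure of Lemma~\ref{lem3}, but in the opposite direction: assuming the $k$-consecution check in $C$ fails, I would build a concrete satisfying assignment for $U'_1\wedge P'(I'_0,L'_0)\wedge\neg P'(I'_1,L'_1)$, which contradicts the hypothesised $1$-consecution in $C'$. The key observation is that the ``fully initialised'' shape of the witness state (all of $B$ set to~$\top$) is exactly what encodes a $k$-step trace of $C$, so $k$-consecution in $C$ and $1$-consecution in $C'$ line up on those states.

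First I would fix a satisfying assignment of $U_k\wedge\bigwedge_{i\in[0,k)}P(I_i,L_i)\wedge\neg P(I_k,L_k)$ and transfer it to $C'$ at time $0$ by setting $b^i_0=\top$ for all $i\in[0,k)$, $L^i_0=L_i$, and $X^i_0=I_i$ for $i\in[0,k)$. Then I would check that each conjunct of $P'(I'_0,L'_0)$ is met: $p_0$ and $p_4$ are immediate from all initialisation bits being $\top$; $p_1$ follows from the unrolling $U_k$ of $C$ (which gives $L_{i+1}\simeq F(I_i,L_i)$, hence $L^{i+1}_0\simeq F(X^i_0,L^i_0)$); $p_2$ follows from the assumed $\bigwedge_{i\in[0,k)}P(I_i,L_i)$; and $p_3$ holds vacuously since there is no index $i$ with $\neg b^{i-1}_0\wedge b^i_0$.

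Next I would use the $1$-consecution hypothesis on $C'$, together with the transition $U'_1$ where I additionally choose $I'_1=I_k$, to conclude $P'(I'_1,L'_1)$. By Def.~\ref{wit}.\ref{transition}, after one step we have $b^i_1=\top$ for all $i\in[0,k)$, $X^i_1=X^{i+1}_0=I_{i+1}$ for $i\in[0,k-1)$, $L^i_1=L^{i+1}_0=L_{i+1}$ for $i\in[0,k-1)$, and $L^{k-1}_1\simeq F(I'_0,L^{k-1}_0)\simeq F(I_{k-1},L_{k-1})\simeq L_k$ using $U_k$. Then the conjunct $p_2(I'_1,L'_1)$, evaluated at $i=k-1$, forces $P(X^{k-1}_1,L^{k-1}_1)\simeq P(I_k,L_k)$ to hold, contradicting our choice of assignment in which $\neg P(I_k,L_k)$ holds.

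The main obstacle, as in Lemma~\ref{lem3}, is keeping the two layers of indices (temporal subscripts $0,1$ and spatial superscripts $0,\dots,k-1$) consistent when matching the $k$-step trace in $C$ with the one-step witness trace in $C'$; in particular one must be careful that the latch and input shifts imposed by the $f'$ definition line up so that the newest copy at time $1$ corresponds exactly to step $k$ of the original trace and hence carries the failing property instance $\neg P(I_k,L_k)$. Once that bookkeeping is in place, the rest is routine checking of the five subproperties of $P'$.
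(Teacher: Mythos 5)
Your proposal is correct and follows essentially the same route as the paper's own proof: fix a falsifying assignment of the $k$-consecution check, embed it into a fully initialised witness state at time $0$ (all $b^i_0=\top$, $L^i_0=L_i$, $X^i_0=I_i$), verify $p_0$--$p_4$ exactly as you list them, apply $1$-consecution with $I'_1=I_k$, and extract $P(I_k,L_k)$ from $p_2$ and $b^{k-1}_1$ at the newest copy to reach the contradiction. The only cosmetic difference is that the paper writes out the time-$1$ assignment explicitly and then checks it against $F'$, whereas you derive it from the transition functions, and the paper cites Theorem~\ref{thm-kwc-cs} for the final step where you inline the $p_2$/$p_4$ argument.
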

\begin{proof}            We assume $U'_1\wedge P(I'_0,L'_0)\Rightarrow
	P(I'_1,L'_1)$ holds. We then do a proof by contradiction by
	assuming that the 
	consecution check for the 
	original circuit fails. 
	Thus there is a 
	satisfying assignment $s$ of the formula		
	$U_k\wedge\bigwedge\limits_{i\in[0,k)} 
	P(I_i, L_i)\wedge \neg P(I_k,L_k)$. Based on $s$, we have a satisfying 
	assignment for 
	$U'_1\wedge P'(I'_0,L'_0)$ as follows. Let $X^i_0=I_i, L^i_0=L_i, 
	$ and $b^i_0=\top$ for $i\in[0,k)$. Let $X^{i-1}_1=I_i, L^{i-1}_1=L_i, 
	b^{i-1}_1=\top$ for $i\in[1,k]$. We now show this satisfies 
	$L'_1\simeq 
	F'(I'_0,L'_0)$. Since $X^{i-1}_1=I_i=X^{i}_0$ and $L^{i-1}_1=L_i=L^i_0$ for 
	all $i\in[1,k)$, 
	the transition relation of copying the components
	is satisfied. Since $s$ 
	satisfies $U_k$, by Def.~\ref{def-unrolling}, it satisfies $L_{k}\simeq 
	F(I_{k-1}, 
	L_{k-1})$. With $X^{k-1}_0=I_{k-1}, 
	L^{k-1}_0=L_{k-1}$, and $L^{k-1}_1=L_k$, we have $L^{k-1}_1 \simeq 
	F(X^{k-1}_0,L^{k-1}_0)$, which satisfies the transition definition for the most recent 
	copy. As for the initialisation bits, since all of 
	them are set to $\top$ in both $B_0$ and $B_1$, 
 the transition definition  of $B$ in Def.~\ref{wit} is satisfied. As a result, 
$U'_1$ is 
	satisfied, and we continue to show the same assignment satisfies 
	$P'(I'_0,L'_0)$. Similar to our proof in Lemma~\ref{lem2}, the values 
	of $B_0$ satisfy 
	$p_0(I_0',L'_0)$ and $p_4(I_0',L'_0)$ immediately. As the premiss of 
	$p_3(I'_0,L'_0)$ is unsatisfied, $p_3(I'_0,L'_0)$ trivially holds. Since 
	$U_k$ is satisfied, by 
	Def.~\ref{def-unrolling}, we have $L_{i+1}\simeq F(I_{i}, L_{i})$ for all 
	$i\in[0,k-1)$, thus also $p_1(I'_0,L'_0)$. Lastly, since 
	$P(I_i,L_i)$ is 
	satisfied 
	for all $i\in[0,k)$, the original property is satisfied in every 
	component 
	$P(X^i_0, L^i_0)$, resulting in the satisfaction of $p_2(I'_0,L'_0)$. By our 
	initial assumption, $P'(I'_1,L'_1)$ is satisfied. By 
	Theorem~\ref{thm-kwc-cs}, we have $P(X^{k-1}_1,L^{k-1}_1)$, thus 
	$P(I_k,L_k)$. We reach a contradiction here. We can 
	therefore conclude the consecution check of the original circuit 
	passes.
\end{proof}

We conclude the following corollary according to Lemma~\ref{lem1} and \ref{lem4}.

\begin{coro}~\label{co2}
	Given a circuit $C=(I, L, R, F, P)$ and its witness circuit $C'=(I', L', R', F', 
	P')$. If $P'$ is 1-inductive in $C'$, then $P$ is $k$-inductive in $C$.
\end{coro}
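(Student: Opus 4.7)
The plan is to assemble Corollary~\ref{co2} directly from the two lemmas that immediately precede it, using Definition~\ref{kind} as the goal to be decomposed. Starting from the hypothesis that $P'$ is $1$-inductive in $C'$, I would unpack this hypothesis into its two constituent obligations (the BMC check $R'(L'_0)\Rightarrow P'(I'_0,L'_0)$ and the single-step consecution $U'_1\wedge P'(I'_0,L'_0)\Rightarrow P'(I'_1,L'_1)$), so that each can be used to discharge a separate obligation of the $k$-inductiveness definition.

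The first step is to invoke Lemma~\ref{lem1}: since $P'$ is $1$-inductive, Lemma~\ref{lem1} immediately yields the BMC check for $C$ at depth $k-1$, namely $U_{k-1}\wedge R(L_0)\Rightarrow \bigwedge_{i\in[0,k)} P(I_i,L_i)$. This discharges condition~\ref{bmc} of Definition~\ref{kind}. The second step is to invoke Lemma~\ref{lem4}: from the consecution half of the $1$-inductiveness hypothesis, Lemma~\ref{lem4} yields $U_k\wedge \bigwedge_{i\in[0,k)} P(I_i,L_i)\Rightarrow P(I_k,L_k)$, which is exactly condition~\ref{cons} of Definition~\ref{kind}. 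With both conditions established, $P$ is $k$-inductive in $C$ by definition.

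There is no real obstacle here, since the two lemmas have been proved already and their statements line up precisely with the two clauses of $k$-induction. The only point that requires a word of care is verifying that the hypotheses of the two lemmas are indeed both obtained from the single assumption ``$P'$ is $1$-inductive in $C'$''; Lemma~\ref{lem1} in fact uses both halves of that assumption in its proof (the BMC check and the consecution check of $C'$), whereas Lemma~\ref{lem4} only uses the consecution half, so no extra hypothesis is needed. Consequently the proof of the corollary is a one-line composition of the two lemmas, and I would write it as such rather than reproducing any of their internal arguments.
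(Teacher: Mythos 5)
Your proposal is correct and matches the paper exactly: the paper derives Corollary~\ref{co2} as an immediate composition of Lemma~\ref{lem1} (which supplies the BMC condition of Definition~\ref{kind} from the full $1$-inductiveness hypothesis) and Lemma~\ref{lem4} (which supplies the consecution condition from the consecution half alone). Your added remark about which halves of the hypothesis each lemma consumes is accurate and consistent with the lemmas' statements and proofs.
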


\begin{theorem}\label{mainthm}
	Given a circuit $C=(I, L, R, F, P)$ and its witness circuit $C'=(I', L', R', F', 
	P')$. $P$ is $k$-inductive in $C$ iff $P'$ is 1-inductive in $C'$.
\end{theorem}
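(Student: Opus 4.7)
The plan is to observe that this theorem is essentially a bookkeeping step that packages together the two directions already established in Corollaries~\ref{co1} and~\ref{co2}, so almost no new work is required. I would proceed by splitting the biconditional into its two implications and citing the respective corollary for each.

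For the forward direction, assuming $P$ is $k$-inductive in $C$, I would invoke Corollary~\ref{co1} directly to conclude that $P'$ is $1$-inductive in $C'$. For the backward direction, assuming $P'$ is $1$-inductive in $C'$, I would invoke Corollary~\ref{co2} to obtain that $P$ is $k$-inductive in $C$. Combining the two yields the biconditional, closing the proof.

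There is no real obstacle here; the nontrivial content has already been discharged in Lemmas~\ref{lem2}, \ref{lem3}, \ref{lem1}, and \ref{lem4}, which feed into the two corollaries. The only thing to double-check is that the hypotheses of the corollaries are stated in exactly the form we use (namely, that both directions talk about the same circuit $C$ and its witness circuit $C'$ constructed as in Definition~\ref{wit}), which is the case. Thus the proof reduces to a one-line appeal to the two corollaries, and the theorem can be regarded as the main structural statement that summarises the preceding technical development.
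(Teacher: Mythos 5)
Your proposal matches the paper exactly: the authors also dispose of Theorem~\ref{mainthm} by noting it follows directly from Corollaries~\ref{co1} and~\ref{co2}, one for each direction of the biconditional. Correct and essentially identical in approach.
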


Theorem~\ref{mainthm} follows Corollary~\ref{co1}  and~\ref{co2} directly.

\section{Implementation and Experimental Evaluation}~\label{exp}
We implemented the proposed certification approach into two complete 
toolkits\cite{Certifaiger}: \certifier for bit-level, and \wlcertifier for 
word-level.
We evaluate the performance of our tools 
against 
several benchmark sets from previous literature and the hardware model checking 
competitions and report the experimental results.

\subsection{Bit-level}\label{certifier}
We implemented our approach described in the previous section into an open 
toolkit called \certifier, which extends the certification toolkit 
\textsc{Certifaiger}~\cite{DBLP:conf/cav/YuBH20}. Our toolkit takes as inputs a 
model in AIGER~\cite{Biere-FMV-TR-11-2} format and a value of $k$ which 
is given by the model checker, then 
generates six SAT checks 
which are verified by a state-of-the-art SAT solver 
Kissat~\cite{BiereFazekasFleuryHeisinger-SAT-Competition-2020-solvers}. 
Note that the AIGER format only allows stratified resets by default.

We implemented a certificate generator based on the witness circuit 
construction as described in Def.~\ref{wit}. The generated witness circuit is 
passed to the certifier which consists of three parts which are independent 
of each other. The stratification 
check is implemented by the AIGER parser~\cite{Biere-FMV-TR-11-2}. The 
simulation 
checker generates three SAT checks as decribed in Def.~\ref{sim} 
(the reset check, the transition check, and the property check).  
The inductive 
invariant checker generates another three SAT checks (the initiation check, 
the consistency check, and the 
consecution check). Note that the inductive invariant checker is part of the 
general 
certification flow and is supposed to be not only for $k$-induction but also 
for various other model checking 
algorithms. Here we reuse the inductive invariant checker 
from \textsc{Certifaiger}. The certification 
is 
successful when all SAT checks are unsatisfiable.

For benchmarks we used the TIP suite from 
~\cite{DBLP:journals/entcs/EenS03} and benchmarks from the Hardware 
Model Checking Competition 2010~\cite{HWMCC10}. The performance of 
\certifier was evaluated 
against \textsc{Certifaiger} on both sets of benchmarks. All experiments 
were performed on a workstation with an 
Intel$^{\text{\textregistered}}$
Core$^{\textsc{tm}}$ i9-9900 CPU 3.60GHz computer with 32GB RAM 
running
Manjaro with Linux kernel 5.4.72-1. 

To determine the speedups of the new implementation proposed in this 
paper, we performed experiments on the same sets of the benchmarks 
used in ~\cite{DBLP:conf/cav/YuBH20}. The results are reported in Table 
~\ref{tb1}. There are significant overall gains in the initiation checks 
($\varphi_{init}$) as well 
as the reset checks ($\varphi_{reset}$). For the initiation check which 
checks the invariant holds in all initial states, the performance 
improvement is largely due to the 
simplification of the 
reset functions in
the new witness circuit construction. Interestingly, the SAT solving time for 
the new reset 
check is close 
to zero, which checks the equality of the reset functions between the 
shared set of latches and the latches in the original circuit. The 
explanation for such small execution time is that the 
reset functions for this set of benchmarks are $\bot$ for all latches, thus 
making the SAT checks rather trivial.

	\begin{table*}[t]
	
		\caption{Summary of certification results for the bit-level TIP suite. 
		\label{tb1}
	}
	\centering
	\begin{tabular}{l|rr|rr|rr|rr|rr|rr}
		
		\hline
		&  
		\multicolumn{2}{c|}{$\varphi_{init}$}    
		&\multicolumn{2}{c|}{$\varphi_{consist}$} &   
		\multicolumn{2}{c|}{$\varphi_{consec}$} 
		&   \multicolumn{2}{c|}{$\varphi_{trans}$} 
		&   \multicolumn{2}{c|}{$\varphi_{prop}$} 
		&   \multicolumn{2}{c}{$\varphi_{reset}$} \\

		Benchmarks & {$t_1$} & {$t_2$}  & {$t_1$} & {$t_2$}   & {$t_1$} & 
		{$t_2$}  & {$t_1$} & {$t_2$} 
		&  {$t_1$} & {$t_2$}  & {$t_1$} & {$t_2$}    \\
		\hline
		\hline
		c.periodic &7.78&0.06 & 0.06& 0.06   &56.82 & 
		56.29 &0.15&0.14
		& 0.05& 0.05 & 84.04&0.00    \\

		n.guidance$_1$&0.19& 0.01& 0.01&0.01 & 3.73 & 
		3.79 & 0.12& 0.12
		& 0.01& 0.01&1.21& 0.00   \\

		n.guidance$_7$& 4.09& 0.02  &0.02& 0.02 & 18.40& 
		18.17  &0.12&0.12
		&  0.02 & 0.02  & 25.22&0.00  \\

		n.tcasp$_2$&0.17 & 0.01 &0.01 &0.01  & 2.64 & 
		2.68 & 0.23 & 0.23
		& 0.01 & 0.02  &1.79 &0.00   \\

		n.tcasp$_3$ & 0.11& 0.01 & 0.01&0.01 & 1.82 & 
		1.70  &0.23& 0.26
		&  0.02 &0.02&1.01&0.00   \\
		
		v.prodcell$_{12}$ &2.35& 0.03 & 0.03&0.03& 59.05& 
		59.22 &0.12 &0.12
		&  0.03& 0.03  &8.48& 0.00   \\
		
		v.prodcell$_{13}$ &0.22& 0.01  & 0.01& 0.01   & 2.99 & 
		2.99  & 0.12&0.12
		&0.01&0.01 & 0.20 & 0.00  \\

		v.prodcell$_{14}$ &0.64& 0.02 & 0.02&0.02  &13.69 & 
		13.69  &0.12& 0.12
		&  0.02 &0.02 & 1.45& 0.00   \\

		v.prodcell$_{15}$ &2.22& 0.02& 0.03& 0.03   &32.66 & 
	32.28  & 0.12&0.12
		&  0.02& 0.02 & 2.26&0.00 \\

		v.prodcell$_{16}$ &0.01&0.01  & 0.01 & 0.01   &1.19 & 
		1.20&0.12 & 0.12
		&  0.01 & 0.01  &0.06& 0.00 \\
		v.prodcell$_{17}$ &2.34&0.03 &0.03 & 0.03 & 48.51& 
		48.17  & 0.12 & 0.12
		& 0.03&0.03  & 6.86& 0.00   \\
		v.prodcell$_{18}$ &0.67& 0.01 & 0.01& 0.01  & 8.67 & 
		8.78 &0.12 &0.12
		& 0.02& 0.02 & 0.79 & 0.00 \\
			v.prodcell$_{19}$ & 1.66 & 0.02&0.02& 0.03 & 31.98& 
	31.78 & 0.12& 0.12
		& 0.03 &0.03 & 3.73& 0.00  \\
		v.prodcell$_{24}$ & 3.32&0.04  &0.04 & 0.04 & 112.12& 
		115.18& 0.12 & 0.12&0.04 & 0.04
		& 17.64 & 0.00     \\
		
		\hline
		\hline

		\hline
	\end{tabular}
	
\vspace{2mm}
	Columns 
	report the 
	benchmark names, and the time (in seconds) used for each SAT check by 
	\textsc{certifaiger} ($t_1$) 
	and \certifier ($t_2$) respectively.
\end{table*}
\begin{figure*}[htp] \label{hw10}
	\centering
	
	\includegraphics[width=\textwidth]{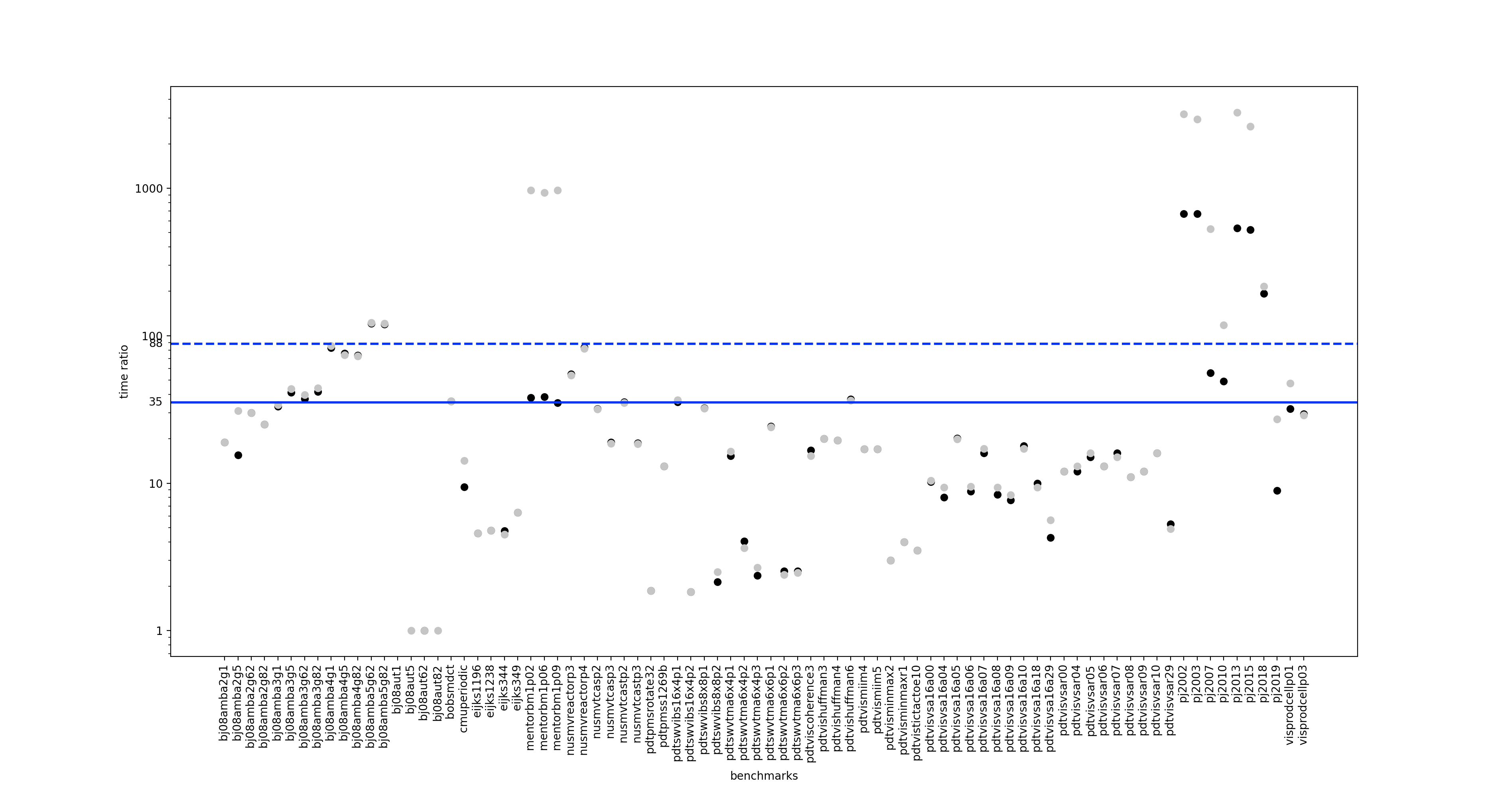}%

	\caption{Bit-level: the experimental results of the HWMCC 2010. The 
		benchmark names are shown on the x-axis. The time ratio on the 
		y-axis is calculated by computing certification time divided by model 
		checking time. The black dots in the graph are the 
		results obtained from \certifier and the grey dots are from 
		\textsc{Certifaiger}. The straight line and
		the dashed line are the calculated means for 
		\certifier and \textsc{Certifaiger} respectively. \label{fig:hwmcc}}
\end{figure*}

The results in Fig.~\ref{fig:hwmcc} demonstrate that \certifier in general 
is much faster than \textsc{certifaiger} during the overall certification 
process. We plotted the time ratios of the total certification time and the 
model checking time (ran on the model checker 
McAiger~\cite{DBLP:conf/fmcad/BiereB08}) 
against the HWMCC benchmarks. Compared to \textsc{Certifaiger}, 
\certifier achieved overall speedups of 2.46 times. 
As we can see from 
the plot, 
especially for the instances with certification time greater than 500 
seconds, the new implementation significantly improved 
the certification performance. We observe performance gains in most 
benchmarks, 
as the previous performance bottleneck for certain benchmarks is the 
QBF solving time for the reset check. For other instances, the bottleneck 
is the 
SAT solving time for the consecution check, which is also improved due to 
a simpler reset construction (as part of the inductive invariant).

\subsection{Word-level}
We further lifted the method to certifying word-level model checking by 
implementing an experimental toolkit called \wlcertifier written in Nim. 
\wlcertifier follows 
the same architecture design as \certifier and uses Boolector 
~\cite{DBLP:conf/cav/NiemetzPWB18} as the underlying SMT solver. All 
models and SMT encodings are in 
BTOR2~\cite{DBLP:conf/cav/NiemetzPWB18} format, which 
is the standard word-level model checking format used in hardware model 
checking 
competitions. Upon invocation the tool first generates a witness circuit 
according to the given value of $k$; then six SMT formulas from 
the inductive invariant check and the stratified simulation check as 
described in Section~\ref{certification}. The resulting SMT formulas are 
verified by Boolector. The stratification check is implemented in the 
BTOR2 parser. The certification passes if all six 
SMT formulas are proven unsatisfiable. In the case of $k=1$, no witness 
circuit is generated and the inductive invariant checker is executed on the 
original 
model.

In order to evaluate our implementation and its scalability, we consider 
two sets of experiments. First of all we consider the Counter example, 
in which there is a 500-bit counter doing exact-modulo 
checks with 
initial 
value 0, and an 
enabler 
which resets the counter value to 0. The counter increments by 1 at each 
timestamp until it reaches  a 
modulo bound $m$ (then resets). The property encodes the fact that the 
counter 
value does not 
reach some bad state $b$, which is then 
$k$-inductive with $k = b-m+1$.
For the experiments, we fixed
the modulo bound at 32 and scaled the inductive depth up to 1000.

We ran all benchmarks on 
the word-level model checker 
AVR~\cite{DBLP:conf/tacas/GoelS20} to get the 
values of $k$. Fig.~\ref{fig:counter} shows the experimental results 
obtained with 
\wlcertifier 
under the same setting as Section~\ref{certifier}.  We record model checking 
time and certification time separately to show comparison. Interestingly, 
the certification time is much lower than the model checking time as can be 
seen in the diagram, meaning certification is at a low cost. As the value of 
$k$ increases, on average the certification time is proportionally lower. 

In Table ~\ref{tb2}  we report the 
experimental results obtained on a superset of the hardware model 
checking competition 2020~\cite{hwmcc20} 
benchmarks\footnote[1]{All benchmarks will be published in the artifact.}. 
To select the benchmarks presented, we first ran AVR with a
timeout of 5000 seconds. We display the results in Table~\ref{tb2} that 
are of particular interest with a running time of more than ten seconds (there
are 7 instances with $k=1$ which were certified and solved under 0.2s). We 
observe that the certification time is much lower than model checking time.
Including certification would increase the runtime of AVR on the model checking
benchmarks by less than 6\%.
As expected, the consecution check as the most complicated formula
among the six takes up the majority of the running time.  

\begin{table*}[t]

	\caption{Summary of certification results word-level benchmarks from 
	the HWMCC20
		\label{tb2}
}
\centering
	
\begin{tabular}{lrrrSSSS}
\hline
 Benchmarks                           &   {k} &   {\#model} &   {\#witness} &      {ModelCh.} &   {Certifi.} &   {Consec.} &   {Ratio} \\
\hline
\hline
 paper\_v3                             & 256 &       35 &      12801 &      10.25 &       1.14 &      0.90 &    0.11 \\
 VexRiscv-regch0-15-p0                &  17 &     2149 &      43077 &      10.31 &       4.04 &      3.29 &    0.39 \\
 zipcpu-pfcache-p02                   &  37 &     1818 &     105874 &      13.95 &       4.40 &      2.73 &    0.32 \\
 zipcpu-pfcache-p24                   &  37 &     1818 &     105874 &      14.35 &       4.49 &      2.83 &    0.31 \\
 zipcpu-busdelay-p43                  & 101 &      950 &     145466 &      15.29 &       6.14 &      3.86 &    0.40 \\
 dspfilters\_fastfir\_second-p42        &  15 &     6732 &     115388 &      16.11 &      14.80 &     12.96 &    0.92 \\
 zipcpu-pfcache-p01                   &  41 &     1818 &     117434 &      18.33 &       6.34 &      4.47 &    0.35 \\
 dspfilters\_fastfir\_second-p10        &  11 &     6732 &      84348 &      24.56 &       9.76 &      8.44 &    0.40 \\
 zipcpu-busdelay-p15                  & 101 &      950 &     145466 &      58.17 &       8.18 &      5.89 &    0.14 \\
 qspiflash\_dualflexpress\_divfive-p120 &  97 &     3100 &     394412 &      63.58 &      22.07 &     14.58 &    0.35 \\
 zipcpu-pfcache-p22                   &  93 &     1818 &     267714 &     166.07 &      23.66 &     19.06 &    0.14 \\
 VexRiscv-regch0-20-p0                &  22 &     2149 &      55862 &     240.50 &      16.76 &     15.76 &    0.07 \\
 dspfilters\_fastfir\_second-p14        &  15 &     6732 &     115388 &     354.01 &      21.27 &     19.44 &    0.06 \\
 dspfilters\_fastfir\_second-p11        &  21 &     6732 &     161948 &     627.69 &      46.88 &     44.30 &    0.07 \\
 dspfilters\_fastfir\_second-p45        &  17 &     6732 &     130908 &    1094.11 &      30.14 &     28.06 &    0.03 \\
 VexRiscv-regch0-30-p1                &  32 &     2150 &      81464 &    1444.47 &      83.38 &     81.95 &    0.06 \\
 dspfilters\_fastfir\_second-p43        &  19 &     6732 &     146428 &    2813.61 &      58.02 &     55.69 &    0.02 \\
\hline
\hline

\hline
\end{tabular}

\vspace{2mm}
	Columns
	report the
	benchmark names,
	the value of $k$,
	the size of the model (measured in number of instructions) and the 
	generated witness,
	 the model checking time, and certification time (in seconds).
	Additionally we list the time Boolector took to solve the consecution 
	check, as well as the ratio of model checking vs. certification 
	time. We only list the consecution check (Consec.) here as it takes up the 
	majority of the certification time.  
\end{table*}

\begin{figure*}[htp]
	\centering

	\includegraphics[width=0.97\textwidth]{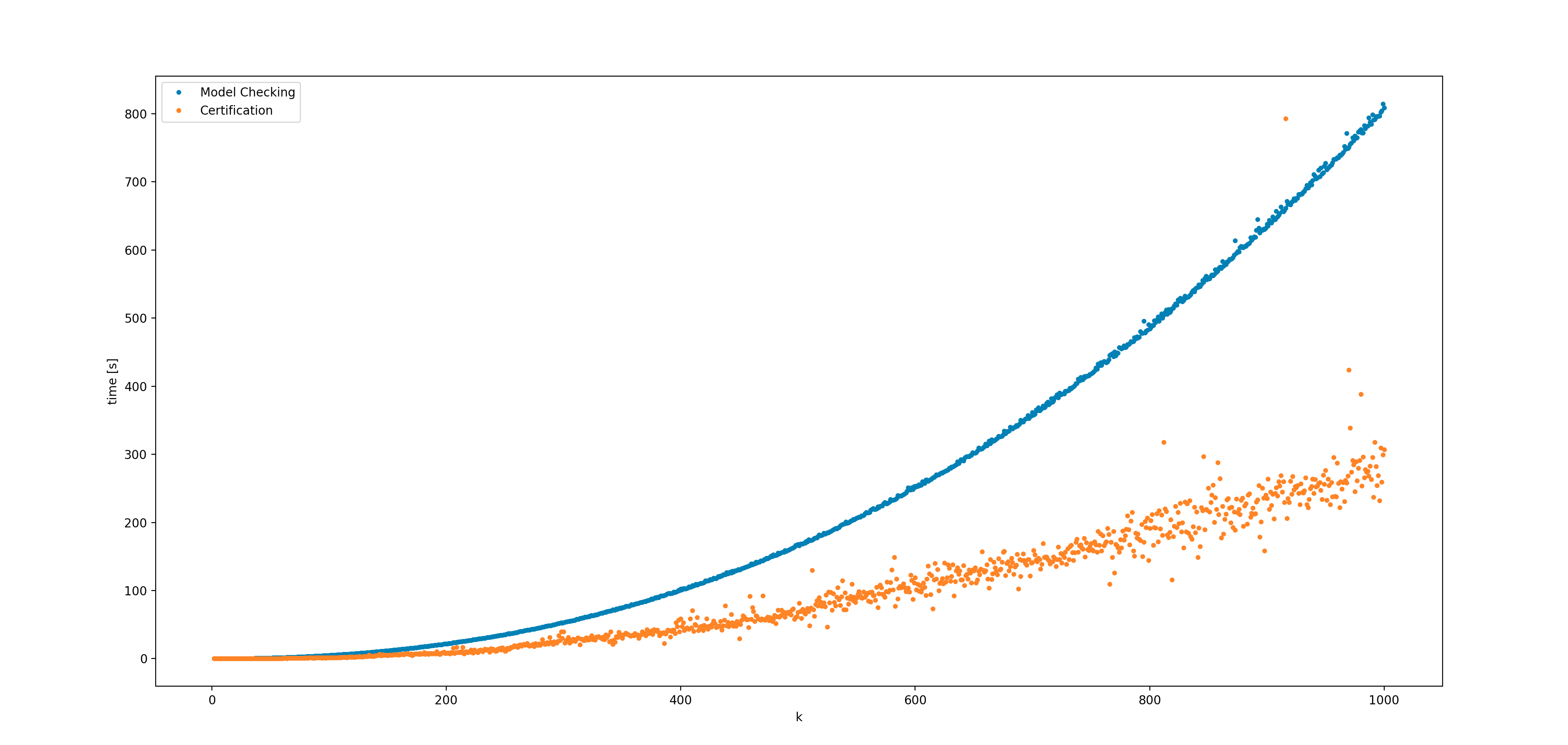}%
	
	\caption{\label{fig:counter}Word-level: model checking vs. certification 
	time for the 
	Counter example 
	with increasing values of $k$. The certification time is significantly 
	smaller than the model checking time.}
\end{figure*}

\section{Conclusion and future work}
We have presented a new certification framework which allows certification 
for $k$-induction to be done by six SAT checks and a polynomial-time 
check. We further lifted our 
approach from bit-level to word-level, and implemented our method in 
both contexts. Experimental results on various sets of benchmarks 
demonstrate the effectiveness and computational efficiency of our toolkits. 
Time needed for word-level certification is significantly smaller than the 
time needed for word-level model checking. This makes adding 
certification for word-level model checking a very interesting and viable 
proposition. The removal of the QBF quantifiers has reduced the theoretical 
complexity of the problem compared to ~\cite{DBLP:conf/cav/YuBH20} and 
also reduced the overall runtime overhead of the certification.

We leave to
future work how to certify $k$-induction with simple paths constraints.
Additionally, we plan to obtain formally verified certificate checkers by using theorem proving.
Finally, how to certify liveness properties is another important avenue of further research.

\subsubsection*{Acknowledgements}

This work is supported by the Austrian Science Fund (FWF) under the project
W1255-N23, the LIT AI Lab funded by the State of Upper
Austria, and Academy of Finland under the project 336092.

\bibliographystyle{IEEEtran}
\bibliography{paper}
\end{document}